\newtheorem{theorem}{Theorem}[section]
\theoremstyle{definition}
\newtheorem{definition}[theorem]{Definition}
\newtheorem{ex}[theorem]{Example}
\theoremstyle{remark}
\newtheorem{remark}[theorem]{Remark}
\numberwithin{equation}{section}
\begin{document}

\title[Incompatibility of  rank-one POVMs and quantum uncertainty relations]{Incompatibility of  rank-one POVMs and quantum uncertainty relations}

%
\author[X. Wang]{Xu Wang}
\address[1]{School of Mathematical Sciences\\ Tianjin Normal University\\
Binshui West Road 393, 300387 Tianjin\\ P.R. China.}
\email{16600368292@163.com}

\author[W. Dong]{Weisong Dong}
\address[2]{School of Mathematics\\ Tianjin University\\ 300354 Tianjin\\ P.R. China}
\email{dr.dong@tju.edu.cn}
\thanks{}

\thanks{}


\author[P. Lian]{Pan  Lian}
\address[3]{School of Mathematical Sciences\\ Tianjin Normal University\\
Binshui West Road 393, 300387 Tianjin\\ P.R. China.
}


\date{}


\begin{abstract} The incompatibility of quantum measurements  is a fundamental feature of quantum mechanics with profound implications for uncertainty relations and quantum information processing. In this paper, we extend the notion of {\em $s$-order incompatibility} of measurements, introduced  by Xu (inspired by De Bi\`evre's ``complete incompatibility" ), to more general rank-one POVMs, and establish novel uncertainty relations. Furthermore, we investigate the incompatibility of multiple POVMs and its connection to support uncertainty relations. These results  may have applications in quantum cryptography, quantum state reconstruction, and quantum compressed sensing.
\end{abstract}

\maketitle
\section{Introduction}
The incompatibility of measurements  is a fundamental feature of quantum mechanics, setting it apart from classical physics.  Recent efforts to rigorously quantify incompatibility have revealed  profound connections to uncertainty relations, quantum state reconstruction, etc.

 Traditionally, incompatibility has been studied using  the  commutator  of quantum operators  \cite{ghk}. Recent studies suggest that  the alternative operator, i.e.  anticommutator or Jordan product  as an equally important  descriptor of measurement incompatibility. This  can be find some clue from the  mathematical fact:  for any non-negative rank-one operators $A$ and $B$, the non-negativity condition $\{A, B\} \geq 0$ holds if and only if they commute, i.e., $[A, B] = 0$. This equivalence highlights a deep interplay between algebraic noncommutativity and geometric negativity in operator space, revealing new perspectives on quantum coherence and measurement incompatibility, see e.g., \cite{gps, GL}.
\vspace{3pt}

A different approach to characterizing incompatibility was introduced by De Bi\`evre in \cite{deb}, who introduced the concept of {\em complete incompatibility} for two orthonormal bases of a Hilbert space $\mathscr{H}$,  linking it to the support uncertainty inequality and the identification of non-classical quantum states whose Kirkwood–Dirac quasiprobability distributions fail to be proper probability distributions. De Bi\`evre's notion was further extended by Xu in \cite{xu} to the {\em $s$-order incompatibility}, which provides a discrete measure (taking only integers numbers) of incompatibility rather than a continuous one, making it significantly different from previous formulations, such as \cite{GL}. Despite these advancements, an interesting question remains: how can we generalize these notions beyond orthonormal bases to more general measurement frameworks such as Positive Operator-Valued Measures (POVMs)?

\vspace{5pt}
We  observe  that the {\em complete incompatibility} or the more general {\em $s$-order incompatibility} for two orthonormal bases can be characterized using a concept from linear algebra — the spark of the matrix formed by concatenating the basis vectors. In the context of compressed sensing, the spark represents the smallest number of linearly dependent columns in a matrix, providing a concrete and computable measure of incompatibility. This insight offers a novel linear algebraic perspective on measurement incompatibility, linking it to sparsity and compressed sensing techniques.\vspace{5pt}

The main contribution of this paper is  extending the {\em $s$-order incompatibility} to general rank-one positive operator-valued measures (POVMs) via  frame theory \cite{ef}, we address the challenge of characterizing non-projective measurement incompatibility.  This extension reveals that the incompatibility of POVMs is governed by the linear dependence structure of their measurement vectors, making it particularly useful where non-projective measurements are essential, such as quantum state estimation with noise resilience. Building on this framework, we establish a new class of uncertainty relations in Section \ref{se4} that emphasize the interplay between measurement incompatibility and the minimal support of quantum states across different measurement settings,  in the sprit of the Ghobber-Jaming uncertainty relation \cite{gj}. In Section \ref{se5}, the original Ghobber-Jaming uncertainty is generalized in a different way using coherence  to rank-one POVMs, addressing a question raised in \cite{kk} for general frames.  

Furthermore, we extend our framework to systems involving multiple rank-one POVMs, introducing a generalized {\em $s$-order incompatibility} criterion. Recall that variance-based uncertainty relations have been studied for multiple observables associated with different bases, see e.g., \cite{huang, kw}. Our approach leads to  multi-measurement support uncertainty relation, which will be useful in applications requiring compatibility constraints across several observables, such as quantum error correction and multi-parameter metrology. Moreover, our support uncertainty relation has sharp lower bound, in contrast to the continuous setting, where establishing sharp bounds for multiple measurements remains  open. 

These findings not only deepen our understanding of measurement incompatibility but also have potential applications in quantum cryptography, quantum state reconstruction, and quantum error correction. The spark-based characterization may provide new insights into designing and analyzing measurement schemes in quantum compressed sensing and quantum tomography.
\vspace{5pt}

{\em Paper Organization and Notations}: The rest of this paper is structured as follows. Section 2 establishes the equivalence between {\em $s$-order incompatibility} and the spark of the measurement matrix. Section 3 extends the $s$-order incompatibility to rank-one POVMs via frame theory. Section 4 derives novel uncertainty relations within the incompatibility framework. Section 5 provides an alternative generalization of Ghobber-Jaming uncertainty relations for frames. Section 6 investigate the implications of incompatibility in multi-basis settings and its connection to support uncertainty. We conclude this paper by discussing potential applications and open questions. 

Throughout the paper, the set of consecutive integers $\{1, 2, \ldots, d\}$ is denoted by $[[1,d]]$.
For two orthogonal bases $A$ and $B$, we denote $(A, B)$ by the matrix whose columns are given by the basis of $A$ and $B$. For a set $S$, the notion $|S|$ denotes the number of elements in $S$, ${\rm span}\{S\}$ denotes the space spanned by $S$ over the complex field $\mathbb{C}$. Let $I=\{1,2,\ldots, m\}$ and $J=\{1,2,\ldots, n\}$.
For a set $E\subset I$, we will write $E^{c}$ for its complement.

\section{$s$-order incompatibility and spark}

Let $A=\{\ket{a_{j}}\}_{j=1}^{d}$ and $B=\{\ket{b_{k}}\}_{k=1}^{d}$ be two orthonormal bases for a $d$-dimensional  complex Hilbert space $\mathscr{H}$. To eliminate the phase ambiguity $\ket{a_{j}}\rightarrow e^{i\theta_{j}}\ket{a_{j}}$ with $\theta_{j}\in \mathbb{R}$ real number and $i=\sqrt{-1}$,  we consider as usual the sets of rank-one projectors (for a precise definition,  see Definition \ref{ran1} below)
\begin{equation*}
    \overline{A}=\{\ket{a_{j}}\bra{a_{j}}\}_{j=1}^{d} \qquad{\rm and} \qquad \overline{B}=\{\ket{b_{k}}\bra{b_{k}}\}_{k=1}^{d}. 
\end{equation*}
The bases $A$ and $B$ are said to be compatible if the projectors  $\overline{A}$ and $\overline{B}$ commute; otherwise, they are considered incompatible. It is worthwhile to further classify the degrees of incompatibility. De Bi\`{e}vre introduced the purely algebraic notion  {\em complete incompatibility} in \cite{deb},  which  is further refined  by Xu as follows.

\begin{definition}[$s$-order incompatibility \cite{xu}] \label{def1} Two orthonormal bases  $A$ and $B$ are said to be  $s$-order incompatible, where $s\in [[2, d+1]]$, if 
\begin{enumerate}
    \item For all nonempty subsets $S_{A}\subset A$ and $S_{B}\subset B$ with $|S_{A}|+|S_{B}|<s$, it holds that \[{\rm span}\{S_{A}\} \cap {\rm span}\{S_{B}\}=\{0\}.\]
    \item There exist subsets $S_{A}$ and $S_{B}$ with $|S_{A}|+|S_{B}|=s$, such that  \[{\rm span}\{S_{A}\}\cap {\rm span}\{S_{B}\}\neq \{0\}.\]
\end{enumerate}  
\end{definition}
\begin{remark}
   This concept provides a characterization for the degree of incompatibility. 
\end{remark}

    Alternatively,  the concept of  a matrix's spark  is widely used in compressed sensing, see e.g., \cite{de}.
\begin{definition}[Spark or Kruskal rank] The  spark of a $m\times n$ matrix $M$ is the smallest integer $k$ such that there exists a set of $k$ columns in $M$ that are linearly dependent.
\end{definition}

It is interesting to observe the following connection.

\begin{theorem}  Two orthonormal bases  $A$ and $B$ are $s$-order incompatible if and only if  ${\rm spark}(A, B)=s,$
where $(A, B)$ is the matrix whose column vectors are the basis vectors.
\end{theorem}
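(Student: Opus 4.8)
The plan is to show that both sides of the claimed equivalence are governed by a single combinatorial quantity. Define
\[
m = \min\bigl\{\, |S_{A}| + |S_{B}| : \emptyset \neq S_{A}\subset A,\ \emptyset \neq S_{B}\subset B,\ {\rm span}\{S_{A}\} \cap {\rm span}\{S_{B}\} \neq \{0\} \,\bigr\}.
\]
I would first argue that Definition~\ref{def1} is precisely the statement $m = s$: condition (1) says that no admissible pair of total size below $s$ has nontrivial span intersection, i.e.\ $m \geq s$, while condition (2) exhibits such a pair of total size exactly $s$, i.e.\ $m \leq s$. Thus it suffices to prove ${\rm spark}(A,B) = m$.

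Next comes the main identification. Because $A$ and $B$ are each orthonormal, hence linearly independent, no subset of the $A$-columns alone and no subset of the $B$-columns alone can be linearly dependent; consequently every linearly dependent set of columns of $(A,B)$ — in particular a spark-attaining one — must contain at least one column from each block. Writing such a minimal dependent set as $S_{A}\cup S_{B}$ with both parts nonempty gives a nontrivial relation $\sum_{j\in S_{A}} c_{j}\ket{a_{j}} + \sum_{k\in S_{B}} d_{k}\ket{b_{k}} = 0$; minimality forces every coefficient to be nonzero, so $v := \sum_{j} c_{j}\ket{a_{j}} = -\sum_{k} d_{k}\ket{b_{k}}$ is a nonzero vector in ${\rm span}\{S_{A}\} \cap {\rm span}\{S_{B}\}$. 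This yields $m \leq |S_{A}| + |S_{B}| = {\rm spark}(A,B)$.

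For the reverse inequality I would start from a pair $S_{A},S_{B}$ realizing $m$ together with a nonzero $v \in {\rm span}\{S_{A}\} \cap {\rm span}\{S_{B}\}$. Expanding $v$ in each basis and restricting to the (nonempty) supports of the nonzero coefficients produces a nontrivial linear dependence among at most $|S_{A}| + |S_{B}|$ columns of $(A,B)$, whence ${\rm spark}(A,B) \leq m$. Combining the two inequalities gives ${\rm spark}(A,B) = m = s$, completing the proof.

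I expect the delicate point to be the bookkeeping in the support argument: one must verify that passing to the supports of the coefficient vectors keeps both $S_{A}$ and $S_{B}$ nonempty (guaranteed by $v \neq 0$ together with the independence within each basis) and that the resulting column count never exceeds $|S_{A}| + |S_{B}|$, so that the two minimizations coincide exactly rather than merely up to an inequality. Everything else is a direct translation between the language of linear dependence among columns and the language of intersecting spans.
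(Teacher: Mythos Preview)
Your argument is correct. The paper actually states this theorem without proof, treating the identification as an observation; your write-up supplies exactly the natural justification---reducing both $s$-order incompatibility and ${\rm spark}(A,B)$ to the single quantity $m$, and matching them via the elementary equivalence between a nontrivial intersection ${\rm span}\{S_A\}\cap{\rm span}\{S_B\}\neq\{0\}$ and a linear dependence among the columns indexed by $S_A\cup S_B$. The only place to be careful, as you note, is that a minimal dependent column set must draw from both blocks (forced by the independence of each orthonormal basis) and that passing to coefficient supports in the reverse direction keeps both parts nonempty; you handle both points cleanly.
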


\section{Incompatibility of  tight frames and  rank-one POVMs}

In this section, we classify the incompatibility of more general quantum measurements. Recall that a Positive Operator-Valued Measure (POVM)  is a set of non-negative Hermitian operators $\{Q_{j}\}$, which are not necessarily projectors,  satisfying $\sum_{j}Q_{j}={\rm \mathbb{I}}$, where $\rm \mathbb{I}$ is the identity operator on $\mathscr{H}$. We focus on the following specific class of POVMs.
\begin{definition}\label{ran1} A quantum measurement is called  rank-one if its  measurement operators are rank-one, i.e.,  they take the outer-product form $Q_{i}=\mu \mu^{*}$ for some nonzero vectors $\mu\in \mathscr{H}$, where $\mu^{*}$ denotes the conjugate transpose of $\mu$. These associated vectors  are referred to as  measurement vectors.
 \end{definition}
\begin{remark}
     A rank-one POVM  generalizes standard quantum measurement, as its measurement vectors $\{\mu_{j}\}$ need not to be normalized or orthogonal.
\end{remark}
  There exists a fundamental connection between rank-one POVMs and the following so-called tight frames \cite{ef}. 

\begin{definition} Let $\{\varphi_{j}, 1\le j\le n\}$
denote a set of (overcomplete) vectors in a $d$-dimensional space $\mathscr{H}$. The vectors $\{\varphi_{j}\}$  is said to form a tight frame if there exists a constant $\alpha>0$
such that for all $x\in \mathscr{H}$,
\begin{equation*}
    \sum_{j=1}^{n}|\langle x, \varphi_{j}\rangle|^{2}=\alpha^{2}\|x\|^{2}.
\end{equation*}
 If $\alpha=1$, the tight frame is said to be normalized, otherwise, it is  $\alpha$-scaled.
\end{definition}
More precisely, the connection between rank-one POVMs and tight frames is connected as below.
\begin{theorem}\cite{ef} A set of vectors $\varphi_{j}\in  \mathscr{H}$ forms a $\alpha$-scaled tight frame for $\mathscr{H}$ if and only if the scaled vectors $\alpha^{-1}\varphi_{j}$ are the measurement vectors of a rank-one POVM on $\mathscr{H}$. In particular, the vectors $\varphi_{j}$ form a normalized tight frame for $\mathscr{H}$ if and only if they are the measurement vectors of a rank-one POVM on $\mathscr{H}$.
\end{theorem}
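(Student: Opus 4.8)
The plan is to reduce both implications to a single quadratic-form identity, exploiting the fact that a self-adjoint operator on a complex Hilbert space is completely determined by its associated quadratic form. The whole statement will then drop out as an ``if and only if'' at every stage, handling both directions simultaneously.

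First I would record the elementary but crucial observation that each candidate measurement operator is automatically a non-negative rank-one Hermitian operator, so that the only real content of the POVM axioms is the resolution of identity. Writing $\psi_{j} := \alpha^{-1}\varphi_{j}$ and $Q_{j} := \psi_{j}\psi_{j}^{*}$, the operator $Q_{j}$ is manifestly Hermitian and of rank one, and for every $x \in \mathscr{H}$ one computes
\begin{equation*}
\langle x, Q_{j} x\rangle = x^{*}\psi_{j}\psi_{j}^{*} x = |\langle x, \psi_{j}\rangle|^{2} = \alpha^{-2}|\langle x, \varphi_{j}\rangle|^{2} \ge 0,
\end{equation*}
which shows $Q_{j}\ge 0$. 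Hence $\{\psi_{j}\}$ are the measurement vectors of a rank-one POVM precisely when $\sum_{j}Q_{j}=\mathbb{I}$.

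The key step is to pass from the operator equation $\sum_{j}Q_{j}=\mathbb{I}$ to a scalar condition. Since $\sum_{j}Q_{j}$ and $\mathbb{I}$ are both self-adjoint, and since a self-adjoint operator $T$ on a complex Hilbert space is uniquely determined by the values of its quadratic form $x\mapsto\langle x, Tx\rangle$ (the sesquilinear form being recovered from the quadratic form via the polarization identity), the identity $\sum_{j}Q_{j}=\mathbb{I}$ holds if and only if
\begin{equation*}
\sum_{j}\langle x, Q_{j} x\rangle = \langle x, x\rangle = \|x\|^{2}\qquad\text{for all } x\in\mathscr{H}.
\end{equation*}
Combining this with the first display, $\sum_{j}\langle x,Q_{j}x\rangle=\alpha^{-2}\sum_{j}|\langle x,\varphi_{j}\rangle|^{2}$, so the resolution of identity is equivalent to $\sum_{j}|\langle x,\varphi_{j}\rangle|^{2}=\alpha^{2}\|x\|^{2}$ for all $x$, which is exactly the defining condition for $\{\varphi_{j}\}$ to be an $\alpha$-scaled tight frame. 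The normalized statement is the special case $\alpha=1$, where $\psi_{j}=\varphi_{j}$.

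I do not anticipate a genuine obstacle, as the argument is a direct chain of equivalences; the one point demanding care is the justification that the operator identity $\sum_{j}Q_{j}=\mathbb{I}$ reduces to the scalar quadratic-form identity. This is precisely where the complex (rather than real) scalar field is essential, since over $\mathbb{R}$ a quadratic form need not determine its underlying symmetric operator. A secondary technical detail to verify is convergence of $\sum_{j}Q_{j}$ when the frame is infinite; in the finite-dimensional setting of the paper this is vacuous, but I would note it for completeness so that the interchange of summation and the quadratic form is justified.
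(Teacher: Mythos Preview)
Your argument is correct and is the standard one; note, however, that the paper does not supply its own proof of this theorem---it is quoted from \cite{ef} without proof---so there is nothing in the paper to compare against.

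One small correction to your closing remark: over a real Hilbert space a quadratic form \emph{does} determine a symmetric operator (via $2\langle x,Ty\rangle = q(x+y)-q(x)-q(y)$); what fails over $\mathbb{R}$ is only that a quadratic form need not determine a \emph{general} operator. Since $\sum_{j}Q_{j}$ and $\mathbb{I}$ are self-adjoint, your reduction to the scalar identity is valid regardless of the scalar field, so this caveat is unnecessary.
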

\begin{remark}
    For a  pure state $\ket{\varphi}$, the probability of observing the $i$th outcome is 
\begin{equation*}
    p(i)=\langle \varphi, Q_{i}\varphi\rangle = |\langle \mu_{i}, \varphi \rangle|^{2}.
\end{equation*} 
obviously, the probabilities $p(i)$ sum to $1$.
\end{remark}


Using this connection,  we classify the incompatibility of two rank-one POVMs in terms of tight frames.
\begin{definition}\label{ed1} Suppose $A=\{a_{k}, 1\le k\le m\}$ and $B=\{b_{j}, 1\le j\le n\}$ are two tight frames in $\mathscr{H}$, with frame constant $\alpha$ and $\beta$, respectively.  If an integer $s$ satisfies the following conditions:
\begin{enumerate}
    \item  For any non-empty subsets $S\subseteq I$ and $J\subseteq J$ with $|S|+|T|<s$, and for any non-zero $x\in \mathscr{H}$, at most one of  the following holds,
    \begin{equation*}
        \sum_{k\in S}|\langle x, a_{k}\rangle|^{2}=\alpha\|x\|^{2} \quad {\rm and} \quad \sum_{j\in T}|\langle x, b_{j}\rangle|^{2}=\beta\|x\|^{2}.
    \end{equation*}  
    \item There exist non-empty subsets $S\subseteq I$ and $T\subseteq J$ with $|S|+|T|=s$, and a  a nonzero  $x\in \mathscr{H}$, such that
    \begin{equation*}
        \sum_{k\in S}|\langle x, a_{k}\rangle|^{2}=\alpha\|x\|^{2} \quad {\rm and} \quad \sum_{j\in T}|\langle x, b_{j}\rangle|^{2}=\beta\|x\|^{2}.
    \end{equation*}
\end{enumerate}
Then the frames $A$ and $B$ are said to  be $s$-order incompatible.
\end{definition}
\begin{remark} This notion is also linear algebraic. The two conditions in the definition are often more practical to be verified by  considering the spanning set of frame vectors.  
\end{remark}
\begin{remark} When $A$ and $B$ are orthonormal bases, we recover the $s$-order incompatibility in Definition \ref{def1}.
\end{remark}

Importantly, it should be noted that if $A$ and $B$ are $s$-order incompatible, then 
\begin{equation} \label{incoms}
    s\ge {\rm spark}(A, B).
\end{equation} This follows from the fact that frames are typically overcomplete. We provide an example.
\begin{ex} \label{ex3} Consider the tight frames $A=\{a_{1}, a_{2}\}$ and $B=\{b_{1}, b_{2}, b_{3}\}$ (see Figure \ref{PQRAFs}),  where
\begin{enumerate}
    \item $a_{1}=(1, 0)$, $a_{2}=(0,1)$;
    \item $b_{1}=\left(0, \frac{\sqrt{2}}{2}\right)$, $b_{2}=\left(\frac{\sqrt{2}}{2}, \frac{1}{2}\right)$ and $b_{3}=\left(\frac{\sqrt{2}}{2}, -\frac{1}{2}\right)$.
\end{enumerate}
\begin{figure*}[h]
	\centering
	\subfloat[(1)]{\includegraphics[width=1.6in]{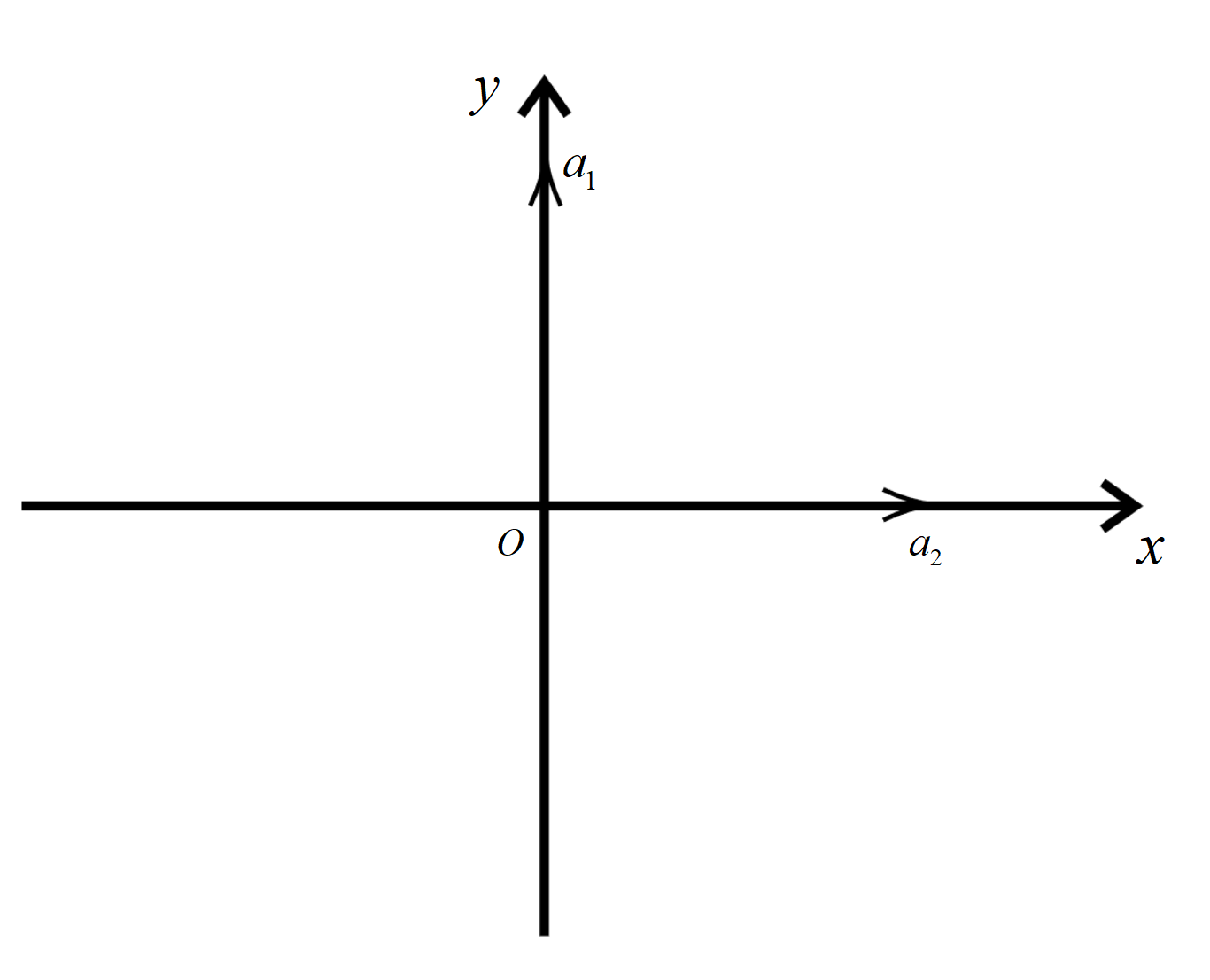}%
		\label{pic_trial_2_0}}\,
	\subfloat[(2)]{\includegraphics[width=1.8in]{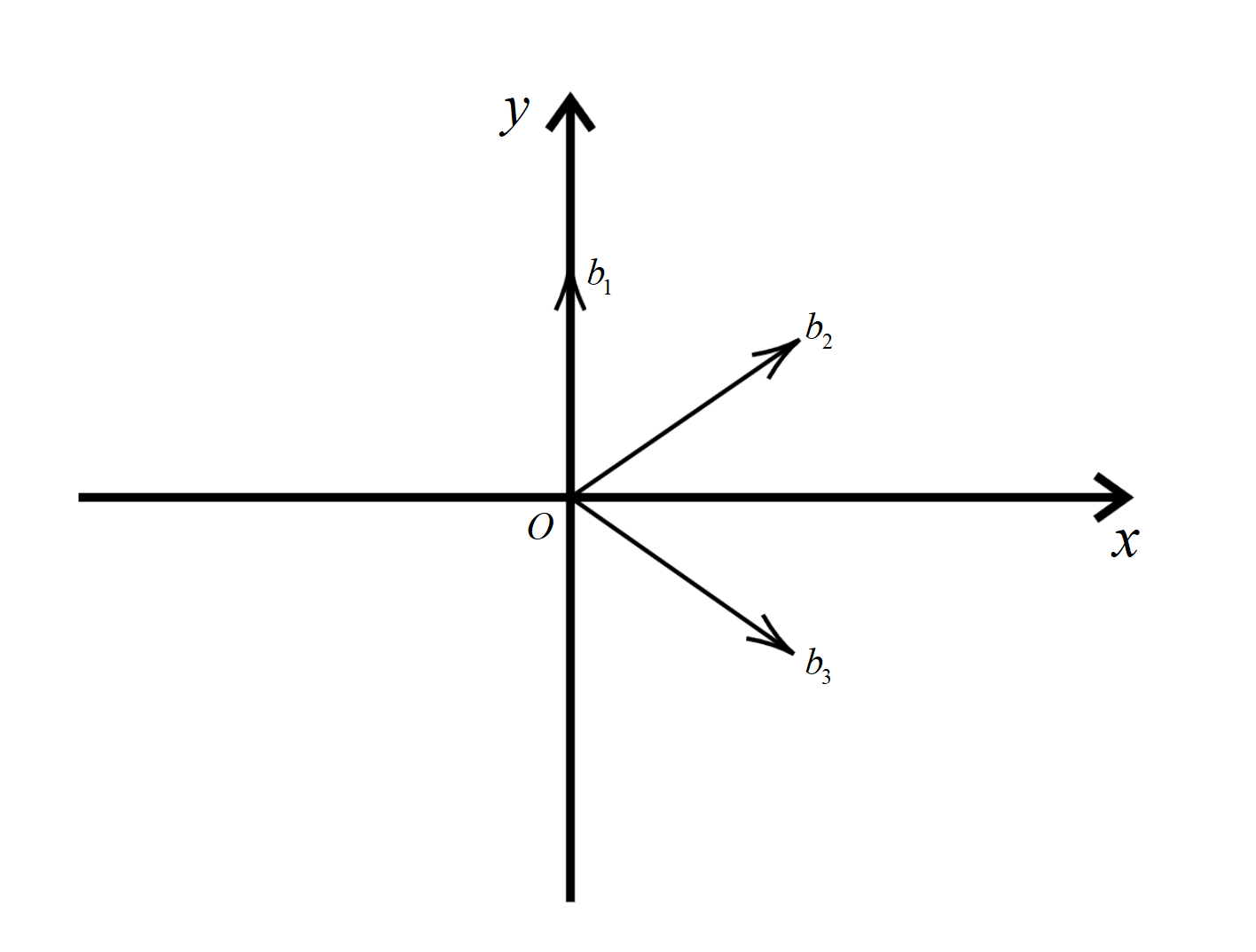}%
		\label{PQRAF_re2}}\,
	\caption{Frames vectors of $A$ and $B$ }
	\label{PQRAFs}
\end{figure*}
It is not hard to verify
\begin{equation*}
    \min(n_{A}(x)+n_{B}(x))=3, \quad {\rm whereas}\quad {\rm spark}(A, B)=2.
\end{equation*}    
\end{ex}

At the end of this section, we show the incompatibility and the support uncertainty relation are closely related as expected.
For a pure state $\ket{\varphi}$, let $n_{A}(\ket{\varphi})$ and $n_{B}(\ket{\varphi})$ denote the number of non-zero elements in the sequences $\{\langle a_{j}, \varphi\rangle\}_{j=1}^{m}$ and  $\{\langle b_{k}, \varphi\rangle\}_{k=1}^{n}$, respectively.

\begin{theorem}[tight support uncertainty relation] \label{sur} Let $A$ and $B$ be two $s$-order incompatible tight frames. Then, for any non-zero pure state $\ket{\varphi}$, we have 
\begin{equation*}
    n_{A}(\ket{\varphi})+n_{B}(\ket{\varphi})\ge s,
\end{equation*} 
where $s$ is optimal.
\end{theorem}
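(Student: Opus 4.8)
The plan is to identify the support sets of $\ket{\varphi}$ directly with the subsets appearing in Definition \ref{ed1}, so that the two incompatibility conditions translate almost verbatim into the desired inequality and its sharpness. The key elementary observation is this: for a tight frame $A$ with constant $\alpha$ one always has $\sum_{k=1}^{m}|\langle \varphi, a_{k}\rangle|^{2}=\alpha\|\varphi\|^{2}$, so if $S\subseteq I$ denotes the support $\{k:\langle \varphi, a_{k}\rangle\neq 0\}$, then dropping the vanishing terms gives
\begin{equation*}
\sum_{k\in S}|\langle \varphi, a_{k}\rangle|^{2}=\alpha\|\varphi\|^{2},
\end{equation*}
and conversely this equality for a subset $S$ forces $\langle \varphi, a_{k}\rangle=0$ for every $k\notin S$, i.e. the support is contained in $S$. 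The same statement holds for $B$ with constant $\beta$ and support $T=\{j:\langle \varphi, b_{j}\rangle\neq 0\}$.

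For the inequality I would fix a nonzero $\ket{\varphi}$ and take $S$ and $T$ to be its exact supports, so that $|S|=n_{A}(\ket{\varphi})$ and $|T|=n_{B}(\ket{\varphi})$. Since $A$ and $B$ span $\mathscr{H}$, neither support can be empty, so $S$ and $T$ are legitimate non-empty index sets. By the observation above, both tight-frame equalities of Definition \ref{ed1} hold simultaneously for $x=\ket{\varphi}$. The contrapositive of condition (1) of that definition then forbids $|S|+|T|<s$, whence
\begin{equation*}
n_{A}(\ket{\varphi})+n_{B}(\ket{\varphi})=|S|+|T|\ge s.
\end{equation*}

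For optimality I would invoke condition (2): there exist non-empty $S_{0}\subseteq I$ and $T_{0}\subseteq J$ with $|S_{0}|+|T_{0}|=s$ together with a nonzero $x$ for which both tight-frame equalities hold. By the converse half of the observation, the $A$-support of $x$ is contained in $S_{0}$ and its $B$-support in $T_{0}$, so $n_{A}(x)+n_{B}(x)\le |S_{0}|+|T_{0}|=s$. Combined with the lower bound just established, this forces $n_{A}(x)+n_{B}(x)=s$, exhibiting a pure state that saturates the inequality and thereby proving that $s$ is optimal.

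The argument is short because the real content is packaged into the incompatibility hypothesis; the only point requiring genuine care is the bookkeeping that ``full tight-frame energy concentrated on $S$'' is equivalent to ``support contained in $S$,'' together with the non-emptiness of the supports, which rests on $A$ and $B$ being spanning frames rather than arbitrary vector families. I do not expect a deep obstacle beyond correctly matching the directions of the two inequalities — the lower bound uses the \emph{exact} support, while the optimality argument uses a \emph{containment} of support — and making sure the phase/order conventions in $|\langle\cdot,\cdot\rangle|^{2}$ are handled consistently across the two halves.
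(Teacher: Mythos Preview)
Your proof is correct and follows essentially the same approach as the paper: both identify the support sets of a state with the index sets in Definition~\ref{ed1}, use condition~(1) via its contrapositive for the lower bound, and condition~(2) for sharpness. The paper packages this by setting $t=\min_{\ket{\varphi}}(n_A+n_B)$ and verifying that $t$ satisfies both clauses of the definition (hence $t=s$), whereas you prove the inequality and the optimality separately; your explicit isolation of the equivalence ``full tight-frame energy on $S$ $\Leftrightarrow$ support $\subseteq S$'' is a helpful clarification that the paper leaves implicit.
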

 \begin{proof} Suppose $\min_{\ket{\varphi}}\{ n_{A}(\ket{\varphi})+n_{B}(\ket{\varphi})\}=t$. This means the following two:
 \begin{enumerate}
     \item There exists a pure state $\ket{\varphi}$, such that 
$ n_{A}(\ket{\varphi})+n_{B}(\ket{\varphi})=t$. Namely for such $\ket{\varphi}$, let $S=\{k:\langle a_{k}, \varphi\rangle \neq 0\}$ and $T=\{j: \langle x, b_{j}\rangle \neq 0\}$,  we have $|S|+|T|=t$,  
 \begin{equation*}
        \sum_{k\in S}|\langle  a_{k}, \varphi\rangle|^{2}=\alpha\|x\|^{2} \quad {\rm and} \quad \sum_{j\in T}|\langle  b_{j}, \varphi\rangle|^{2}=\beta\|x\|^{2}
    \end{equation*} 
    according to the definition of tight frames.
    \item There does not exist  a nonzero state $\ket{\phi}$ such that $ n_{A}(\ket{\phi})+n_{B}(\ket{\phi})<t$. It is equivalent to say for any  nonempty sets $S\subset I$ and $T\subset J$ satisfying $|S|+|T|<t$, and for  any non-zero $x\in \mathscr{H}$, at most one of the following holds
    \begin{equation*}
        \sum_{k\in S}|\langle x, a_{k}\rangle|^{2}=\alpha\|x\|^{2}, \quad {\rm and} \quad \sum_{j\in T}|\langle x, b_{j}\rangle|^{2}=\beta\|x\|^{2}.
    \end{equation*} 
 \end{enumerate}
Comparing with Definition \ref{ed1}, the above indeed means that $A$ and $B$ are $t$-order incompatible. Therefore $t=s$, the proof is complete.
 \end{proof}

\section{Uncertainty relation from  $s$-order incompatibility} \label{se4}
The main aim of this section is to derive an uncertainty relation  based on  $s$-order incompatibility. While these bounds can be obtained using the support uncertainty relation in Theorem \ref{sur} combined with a standard  compactness argument,  the constant derived through this approach remains unclear (as also noted in \cite{gj}). We study it from a new linear algebra perspective.
 
 Let \( \{a_k\}_{k=1}^{m} \) and \( \{b_j\}_{j=1}^{n} \) be two tight frames in a Hilbert space $\mathscr{H}$. After normalization,   each set corresponds to a rank-one POVM.
\begin{definition}[Minimal Reconstruction Number] For frames $A$ and $B$, define
\[
 t_{\min}= \min\{t \,|\, \forall S\subsetneq I,  T\subsetneq J, |S| + |T| > t,\operatorname{span} \left(\{ a_k \}_{k \in S} \cup \{ b_j \}_{j \in T}\right) =\mathscr{H} \}.
\]   
\end{definition}

It is shown that the quantities  $s$ and $t_{\min}$ are intrinsically related.

\begin{theorem} \label{th1}
    Suppose two tight frames $A$ and $B$ are \( s \)-order incompatible. Then  
\[
s + t_{\min} = m+n.
\]
\end{theorem}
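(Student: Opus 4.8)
The plan is to reduce both Definition \ref{ed1} and the definition of $t_{\min}$ to one and the same combinatorial statement — namely, which subfamilies of $A\cup B$ span $\mathscr{H}$ — after which $s+t_{\min}=m+n$ follows purely by complementing index sets.

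First I would use the tight-frame identities to rewrite the saturation conditions geometrically. From the full-frame relations $\sum_{k\in I}|\langle x,a_k\rangle|^2=\alpha\|x\|^2$ and $\sum_{j\in J}|\langle x,b_j\rangle|^2=\beta\|x\|^2$, a partial sum attains its bound exactly when the omitted terms vanish:
\begin{equation*}
\sum_{k\in S}|\langle x,a_k\rangle|^2=\alpha\|x\|^2 \iff \langle x,a_k\rangle=0\ \ \forall k\in S^c \iff x\perp{\rm span}\{a_k:k\in S^c\},
\end{equation*}
and likewise for $B$ and $T$. Hence a nonzero $x$ saturating \emph{both} bounds exists if and only if $\{a_k\}_{k\in S^c}\cup\{b_j\}_{j\in T^c}$ fails to span $\mathscr{H}$, using $({\rm span}\,U)^\perp\cap({\rm span}\,V)^\perp=({\rm span}(U\cup V))^\perp$. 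This is the crux of the argument: it converts each analytic saturation statement into a linear-algebraic spanning statement on the \emph{complementary} index sets.

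Next I would pass to complements. Writing $S'=S^c\subseteq I$ and $T'=T^c\subseteq J$, we have $|S'|+|T'|=m+n-(|S|+|T|)$, and $S,T$ are nonempty precisely when $S',T'$ are proper. Under this dictionary, condition (1) of Definition \ref{ed1} becomes: every pair of proper subsets $S'\subsetneq I,\ T'\subsetneq J$ with $|S'|+|T'|>m+n-s$ satisfies ${\rm span}(\{a_k\}_{k\in S'}\cup\{b_j\}_{j\in T'})=\mathscr{H}$; and condition (2) becomes: there exist proper $S'\subsetneq I,\ T'\subsetneq J$ with $|S'|+|T'|=m+n-s$ whose span is a proper subspace. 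Both are exactly assertions about the threshold in the definition of $t_{\min}$.

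Finally I would read off the two inequalities. The reformulated condition (1) says precisely that $t=m+n-s$ meets the spanning requirement defining $t_{\min}$, so $t_{\min}\le m+n-s$. Since that requirement only weakens as $t$ grows, the admissible values of $t$ form an upward-closed set; and the reformulated condition (2) exhibits a pair with $|S'|+|T'|=m+n-s>m+n-s-1$ that does not span, so $t=m+n-s-1$ is inadmissible and $t_{\min}\ge m+n-s$. Together these give $t_{\min}=m+n-s$, i.e. $s+t_{\min}=m+n$. I expect the only delicate part to be the bookkeeping of quantifiers under complementation — pairing ``nonempty'' with ``proper'' and the strict bound $<s$ with $>m+n-s$ — and, in the lower bound, checking that the boundary case $|S'|+|T'|=m+n-s$ furnished by condition (2) is exactly what pins $t_{\min}$ down rather than merely bounding it.
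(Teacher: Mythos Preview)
Your proof is correct and follows essentially the same route as the paper: both arguments convert the saturation conditions of Definition \ref{ed1} into spanning/orthogonality statements about the complementary index sets, then read off the two inequalities $t_{\min}\le m+n-s$ and $t_{\min}\ge m+n-s$ from conditions (1) and (2) respectively. Your treatment is in fact a bit cleaner on the quantifier bookkeeping (matching ``nonempty'' with ``proper'' under complementation) than the paper's own write-up.
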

\begin{proof}
    We  show that \( t_{\min} = m + n - s \).

\textbf{1)} For any subsets $S\subsetneq I$ and $T\subsetneq J$ satisfying  \( |S| + |T| > m+n-s \), it follows that their complements $S^{c}$ and $T^{c}$ satisfy
\[
|S^c| + |T^c| < s.
\]
Since the frames $\{a_{k}\}_{k=1}^{m}$ and $\{b_{j}\}_{j=1}^{n}$ are \( s \)-order incompatible,  thus by the equivalent Definition \ref{ed1}, for any non-zero vector \( x \in \mathscr{H} \), the following two identities
\[
\sum_{k \in S^c} | \langle x, a_k \rangle |^2 = \alpha \| x \|^2 \quad \mbox{and}\quad \sum_{j \in T^c} | \langle x, b_j \rangle |^2 = \beta \| x \|^2
\]
cannot simultaneously hold. This implies that
\[
\sum_{k \in S^c} | \langle x, a_k \rangle|^2 + \sum_{j \in T^c} | \langle x, b_j \rangle |^2 < (\alpha+\beta)\| x \|^2, \qquad \forall\, x\neq 0.
\]
It is equivalent to say, for any non-zero $x\in \mathscr{H}$,
\[
\sum_{k \in S} | \langle x, a_k \rangle|^2 + \sum_{j \in T} | \langle x, b_j \rangle |^2 >0,
\]
thus \( \left(\operatorname{span} \{ a_k \}_{k \in S^c} \cup  \{ b_j \}_{j \in T^c} \right)^{\perp}= \{0\} \). It follows that \[\operatorname{span} \left(\{ a_k \}_{k \in S^c} \cup  \{ b_j \}_{j \in T^c}\right) = \mathscr{H}.  \]
Therefore, \( t_{\min} \leq m + n - s \).

\textbf{2)} By the definition of $s$-order incompatibility,  there exist two non-empty subsets \( S\subset I \) and \( T\subset J \) satisfying \( |S| + |T| = s \), such that for some nonzero \( x \),  the following hold
\[
\sum_{k \in S} | \langle x, a_k \rangle |^2 = \alpha \| x \|^2, \quad \mbox{and}\quad  \sum_{j \in T} | \langle x, b_j \rangle |^2 = \beta \| x \|^2.
\]
This implies that for any $S^{c}\subsetneq I$ and $T^{c}\subsetneq J$ satisfying $|S^{c}|+|T^{c}|=m+n-s$,  there exists  non-zero $x\in \mathscr{H}$ such that 
\begin{equation*}
    \sum_{k \in S^c} | \langle x, a_k \rangle |^2+\sum_{j \in T^{c}} | \langle x, b_j \rangle |^2=0.
\end{equation*}
Thus \( \left({\rm span} \left(\{ a_k \}_{k \in S^c} \cup  \{ b_j \}_{j \in T^{c}}\right)\right)^{\perp} \neq \{0\} \), i.e.,
 \[ {\rm span} \left(\{ a_k \}_{k \in S^c} \cup  \{ b_j \}_{j \in T^c}\right) \neq \mathscr{H}. \]
Hence \( t_{\min} \ge m + n - s \).

Collecting all, we obtain \( t_{\min} = m + n - s \), completing the proof.
\end{proof}
\begin{remark}
    The quantity $t_{\min}$  provides a lower bound (i.e., $t_{\min}+1$) on the number of measurements required to reconstruct a quantum state from two rank-one POVMs.
\end{remark}

 By Theorem \ref{th1}, for any non-empty subsets $S\subset I $ and \( T \subset J \) satisfying \( |S| + |T| < s \), we have
\[
\operatorname{span} \left(\{ a_k \}_{k \in S^c} \cup  \{ b_j \}_{j \in T^c}\right) = \mathscr{H}. 
\]
Thus the set $\{a_{k}\}_{k\in S^{c}} \cup \{b_{j}\}_{j\in T^{c}}$ forms a frame in $\mathscr{H}$ (by \cite[Lemma 1.2]{ckp}), whose lower frame bound is denoted by  $C_{S, T}$. Furthermore, denoting $C_{s}=\min_{|S| + |T| < s} C_{S, T}$. The following uncertainty relation follows.

\begin{theorem}
 Let \( \{a_k, 1\le k\le m\} \) and \( \{b_j, 1\le j\le n\} \)  be two \( s \)-order incompatible tight frames in $\mathscr{H}$. Then there exists a constant \( C \), such that for all subsets \( S \subseteq I \) and \( T \subseteq J \) satisfying \( |S| + |T| < s \), we have 
\begin{equation} \label{nc1}
    \| x \|^2 \leq C \left( \sum_{k \in S^c} | \langle x, a_k \rangle |^2 + \sum_{j \in T^c} | \langle x, b_j \rangle |^2 \right).
\end{equation}
where   $C=1/\min\{\alpha, \beta, C_{s}\}$, in which $\alpha$ and $\beta$ are the frame constants.
\end{theorem}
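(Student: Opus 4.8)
The plan is to reduce the claimed inequality to the lower frame bound of the reduced system $\{a_k\}_{k \in S^c} \cup \{b_j\}_{j \in T^c}$, invoking the structural consequence of Theorem \ref{th1} recorded just above, and then to treat the degenerate choices of $S$ and $T$ separately so as to account for all three quantities $\alpha$, $\beta$, $C_s$ appearing in the constant. First I would fix arbitrary subsets $S \subseteq I$ and $T \subseteq J$ with $|S| + |T| < s$ and split according to whether $S$ or $T$ is empty.

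In the generic case where both $S$ and $T$ are non-empty, their complements satisfy $S^c \subsetneq I$, $T^c \subsetneq J$ and $|S^c| + |T^c| = m + n - |S| - |T| > m + n - s = t_{\min}$ by Theorem \ref{th1}; hence $\operatorname{span}(\{a_k\}_{k\in S^c} \cup \{b_j\}_{j \in T^c}) = \mathscr{H}$, so by \cite[Lemma 1.2]{ckp} this family is a frame with positive lower bound $C_{S,T} \geq C_s$. Unwinding the lower frame bound inequality gives
\[
C_s \|x\|^2 \leq C_{S,T}\|x\|^2 \leq \sum_{k \in S^c}|\langle x, a_k\rangle|^2 + \sum_{j \in T^c}|\langle x, b_j\rangle|^2,
\]
which is precisely \eqref{nc1} with constant $1/C_s$. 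The remaining step is to handle the boundary configurations that lie outside the definition of $C_s$, which minimizes only over non-empty $S,T$. If $S = \emptyset$, then $S^c = I$ and the tight-frame identity for $A$ gives $\sum_{k \in S^c}|\langle x, a_k\rangle|^2 = \alpha\|x\|^2$, so the full right-hand side already dominates $\alpha\|x\|^2$ and hence $\|x\|^2 \leq \alpha^{-1}(\text{RHS})$; symmetrically, $T = \emptyset$ yields the factor $\beta^{-1}$ from the tight-frame identity for $B$ (and if both are empty either bound applies). Collecting the three cases, the single constant $C = 1/\min\{\alpha, \beta, C_s\} = \max\{\alpha^{-1}, \beta^{-1}, C_s^{-1}\}$ dominates each case-specific factor and therefore validates \eqref{nc1} uniformly over all admissible $S,T$.

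The only genuine subtlety, and the reason the constant is not simply $1/C_s$, is that $C_s$ is a minimum taken over non-empty index sets, so the degenerate choices $S = \emptyset$ or $T = \emptyset$ escape its scope and must be controlled directly by the global tight-frame constants $\alpha$ and $\beta$. Once this is noticed, positivity of $\min\{\alpha, \beta, C_s\}$, being a minimum of finitely many strictly positive numbers, guarantees that $C$ is finite, and no compactness argument or unquantified constant of the type lamented in \cite{gj} is required.
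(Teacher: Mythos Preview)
Your proposal is correct and follows the same approach as the paper: the paragraph preceding the theorem already records that for non-empty $S,T$ with $|S|+|T|<s$ the reduced system $\{a_k\}_{k\in S^c}\cup\{b_j\}_{j\in T^c}$ is a frame with lower bound $C_{S,T}\ge C_s$, and the theorem is then stated as an immediate consequence. Your contribution is to make explicit why the constants $\alpha$ and $\beta$ must appear alongside $C_s$---namely, because $C_s$ is defined as a minimum over \emph{non-empty} $S$ and $T$, so the boundary cases $S=\emptyset$ or $T=\emptyset$ fall outside its scope and are handled directly by the tight-frame identities; the paper leaves this point implicit.
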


\begin{remark} If the sets $S$ and $T$ are fixed, the constant in \eqref{nc1} can be chosen as $C=1/\min\{\alpha, \beta, C_{S, T}\}$. 
\end{remark}

 For a matrix $M$, the dictionary coherence $\mu(A)$ is defined as the maximum correlation between any two distinct columns \begin{equation}\label{qer}
    \mu(M)=\max_{m\neq n}|\langle a_{m}, a_{n}\rangle|.
\end{equation}
If the columns of  $M$ are normalized to unit norm,  a lower bound of its spark is given in terms of its dictionary coherence by 
\begin{equation*}
    {\rm spark}(M)\ge 1+\frac{1}{\mu(M)}.
\end{equation*}
Thus, the uncertainty relation \eqref{nc1} is closely related to  the  Ghobber-Jaming inequality \cite{gj}. However, by inequality \eqref{incoms}, it is seen that \eqref{nc1} is  more general, while the constant provided in \cite{gj} is  explicit  particularly for orthogonal bases.

\section{Ghobber-Jaming uncertainty  for frames}\label{se5}
This section is dedicated to generalizing the Ghobber-Jaming uncertainty inequality in \cite{gj} for two frames $A=\{a_{k}\}_{k=1}^m$ and $B=\{b_{j}\}_{j=1}^{n}$ (not necessarily tight) in terms of their coherence, which is given by 
\begin{equation*}
M(A,B)=\max_{k,j}|\langle a_{k}, b_{j}\rangle| 
\end{equation*}
with explicit constant. This question was raised at the end of \cite{kk}.
\begin{theorem}
    Let $A$ and $B$ be two frames in \( \mathscr{H} \), with lower and upper frame bounds \( \alpha_1, \beta_1 \) for $A$ and \( \alpha_2, \beta_2 \) for $B$, respectively. Suppose that for any subsets \( S \subseteq \{1, \dots, m\} \) and \( T \subseteq \{1, \dots, n\} \), satisfying 
    \begin{equation*}|S||T|<\frac{\beta_{1}}{\alpha_{2}}\cdot\frac{1}{M(A^{*}, B)^{2}}.   
    \end{equation*}
Then, for any \( x \in \mathscr{H} \), we have 
\[
C(S, T)\| x \| \leq \left( \sum_{k \in S^c} | \langle x, a_k \rangle |^2\right)^{\frac{1}{2}} + \left(\sum_{j \in T^c} | \langle x, b_j \rangle |^2 \right)^{\frac{1}{2}},
\]
where 
\begin{equation*}
\begin{split}
    C(S,T)=&\left(1-\left(\frac{\alpha_{2}}{\beta_{1}}\right)^{1/2}M(A^{*}, B)|S|^{\frac{1}{2}}|T|^{\frac{1}{2}}\right)\\
    & \times \left(\max\left\{\beta_{1}^{-1/2},  \left(1+ \left(\frac{\beta_{2}}{\beta_{1}}\right)^{\frac{1}{2}}\right)\alpha_{1}^{-\frac{1}{2}}mM(A^{*}, A)      \right\} \right)^{-1}.
\end{split} 
\end{equation*}
\end{theorem}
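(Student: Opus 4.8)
The plan is to transplant the coherence argument of Ghobber and Jaming \cite{gj} from orthonormal bases to frames. Write $c=(\langle x,a_k\rangle)_{k=1}^m$ and $d=(\langle x,b_j\rangle)_{j=1}^n$ for the analysis coefficients, and $c_S,d_T$ for their restrictions to $S,T$. In the orthonormal case the two coefficient vectors are related by a unitary change-of-basis matrix $U$ with entries $\langle a_k,b_j\rangle$, and the whole argument rests on two facts: Parseval's identity $\|x\|=\|c\|=\|d\|$, and the submatrix bound $\|1_S U 1_T\|\le M\sqrt{|S||T|}$ coming from the coherence. For frames neither the unitarity nor Parseval survives, so I would replace $U$ by a frame-theoretic transfer operator and replace Parseval by the frame inequalities $\alpha_i\|x\|^2\le\sum|\cdot|^2\le\beta_i\|x\|^2$, carefully tracking how the loss of these two properties inflates the constant.

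Concretely, I would proceed in the following steps. First, apply the lower frame bound of $B$ and split over $T$ and $T^c$ to get $\sqrt{\alpha_2}\,\|x\|\le\|d_T\|+\big(\sum_{j\in T^c}|\langle x,b_j\rangle|^2\big)^{1/2}$; the second summand is already the desired $B$-tail, so everything reduces to controlling the ``concentrated'' norm $\|d_T\|$. Second, I would reconstruct $x$ through the frame operator of $A$, namely $x=\mathcal S_A^{-1}\sum_k c_k a_k$ with $\mathcal S_A=\sum_k a_k a_k^*$, so that $\langle x,b_j\rangle=\sum_k c_k\langle a_k,\mathcal S_A^{-1}b_j\rangle$, and split the $k$-sum into $S$ and $S^c$. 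The $S^c$-part is absorbed into the $A$-tail $\|c_{S^c}\|$ via the synthesis bound, while the $S$-part is the piece that must be squeezed by coherence.

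The heart of the proof is two estimates feeding into the $S$-part. The first is the standard block bound: any $|S|\times|T|$ matrix whose entries are at most $M(A^*,B)$ in modulus has operator norm at most $M(A^*,B)\sqrt{|S||T|}$ (Frobenius, or a Schur test), which is exactly what makes the hypothesis $|S||T|<\beta_1/(\alpha_2 M(A^*,B)^2)$ the threshold for positivity of $C(S,T)$. The second, and genuinely new, estimate controls the perturbation caused by $\mathcal S_A^{-1}$: since the Gram matrix of $A$ differs from a scalar multiple of the identity only through its off-diagonal, whose operator norm is at most $m\,M(A^*,A)$ by a Gershgorin/Schur estimate, one can write $\langle a_k,\mathcal S_A^{-1}b_j\rangle$ as a scalar multiple of the clean inner product $\langle a_k,b_j\rangle$ plus an error governed by $m\,M(A^*,A)$. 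Feeding both into $\|d_T\|$, using $\|c_S\|\le\|c\|\le\sqrt{\beta_1}\,\|x\|$ for the head, and regrouping, the clean coherence contribution produces the factor $1-(\alpha_2/\beta_1)^{1/2}M(A^*,B)\sqrt{|S||T|}$, while the error contribution---after carrying the synthesis bound $\alpha_1^{-1/2}$ and the ratio $(\beta_2/\beta_1)^{1/2}$ through the triangle inequality---produces the competing term $(1+(\beta_2/\beta_1)^{1/2})\alpha_1^{-1/2}m\,M(A^*,A)$ inside the maximum defining $D$. Dividing by $D$ then yields $C(S,T)\|x\|\le \|c_{S^c}\|+\|d_{T^c}\|$.

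I expect the main obstacle to be precisely the control of $\mathcal S_A^{-1}$ in the second estimate: this is the step with no counterpart in the orthonormal setting (where $\mathcal S_A=\mathbb{I}$ and $M(A^*,A)=0$), and it is where the non-tightness and non-orthogonality of $A$ are paid for. One must show that the deviation of $\mathcal S_A^{-1}$ from a scalar is controlled by the self-coherence $M(A^*,A)$ and the frame size $m$, and then keep all the frame-bound factors $\alpha_1,\beta_1,\alpha_2,\beta_2$ in their correct places so that they assemble into exactly the stated $C(S,T)$. As a consistency check I would verify the reduction to \cite{gj}: when $A,B$ are orthonormal bases one has $\alpha_i=\beta_i=1$ and $M(A^*,A)=0$, whence $D=\beta_1^{-1/2}=1$ and $C(S,T)=1-M(A^*,B)\sqrt{|S||T|}$, recovering the original Ghobber--Jaming inequality. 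The remaining constant bookkeeping is routine but tedious, and I would organize it so that each inflation factor is introduced by a single clearly identified inequality.
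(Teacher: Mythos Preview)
Your overall strategy---transport Ghobber--Jaming through a frame transfer operator and keep track of the frame-bound inflations---is the right one, and your first block estimate $\|1_S\,G\,1_T\|\le M(A^*,B)\sqrt{|S||T|}$ is exactly what the paper uses. But your ``second estimate'' misreads both the notation and the mechanism, and as written it would not close.

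The symbol $A^*$ in $M(A^*,B)$ and $M(A^*,A)$ denotes the canonical dual frame $a_k^*=\mathcal S_A^{-1}a_k$, so that $M(A^*,B)=\max_{k,j}|\langle a_k^*,b_j\rangle|=\max_{k,j}|\langle a_k,\mathcal S_A^{-1}b_j\rangle|$. Hence the transfer entries you write down are \emph{already} bounded by $M(A^*,B)$; there is nothing to perturb, no comparison with the ``clean'' $\langle a_k,b_j\rangle$ is needed, and no Gershgorin analysis of $\mathcal S_A^{-1}$ enters anywhere. Your consistency check also breaks here: for an orthonormal basis the dual is $A^*=A$, and the paper's estimate uses the diagonal terms $\langle a_k^*,a_k\rangle$, so $M(A^*,A)=1$, not $0$.

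The factor $m\,M(A^*,A)$ has a completely different origin. The paper does not work in coefficient space; it splits $x$ itself as $x=x_1+x_2$ with $x_1=\sum_{k\in S}\langle x,a_k\rangle a_k^*$ and $x_2=\sum_{k\in S^c}\langle x,a_k\rangle a_k^*$. The head $x_1$ is controlled via the $B$-frame inequality on $T^c$ together with your first block estimate on $T$, which produces the factor $1-(\alpha_2/\beta_1)^{1/2}M(A^*,B)\sqrt{|S||T|}$. The factor $m\,M(A^*,A)$ appears only when one converts $\|x_2\|$ back to the $A$-tail: from $\alpha_1\|x_2\|^2\le\sum_k|\langle x_2,a_k\rangle|^2$ and $\langle x_2,a_k\rangle=\sum_{i\in S^c}\langle x,a_i\rangle\langle a_i^*,a_k\rangle$, a crude Cauchy--Schwarz gives $\alpha_1\|x_2\|^2\le m^2 M(A^*,A)^2\sum_{k\in S^c}|\langle x,a_k\rangle|^2$. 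Then $\|x\|\le\|x_1\|+\|x_2\|$ assembles the stated constant. So the self-coherence $M(A^*,A)$ is not measuring how far $\mathcal S_A^{-1}$ is from a scalar; it is the price of the dual-frame synthesis on $S^c$, and it appears via a very blunt bound rather than any spectral perturbation argument.
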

\begin{proof}
    Let \( x = x_1 + x_2 \), where  
\[
x_1 = \sum_{k \in S} \langle x, a_k \rangle a_k^*, \quad {\rm and}\quad x_2 = \sum_{k \in S^c} \langle x, a_k \rangle a_k^*,
\]
in which $a_{k}^{*}$ is the canonical dual frame. For $x_{1}$, we have
\begin{equation*}
    \begin{split}
        \sum_{j\in T}|\langle x_{1}, b_{j}\rangle|^{2}&=\sum_{j\in T}\left|\sum_{k\in S}\langle x, a_{k}\rangle \langle a_{k}^{*}, b_{j}\rangle\right|^{2}\\
      &\le \sum_{j\in T}\left(\sum_{k\in S}|\langle x, a_{k}\rangle|^{2}\right)\left(\sum_{k\in S}\left|\langle a_{k}^{*}, b_{j}\rangle\right|^{2}\right)\\
       &\le \alpha_{2} M(A^{*}, B)^{2} |S||T|\|x\|^{2}.
    \end{split}
\end{equation*}
Therefore,
\begin{equation*}
\begin{split}
    \sum_{j\in T^{c}}|\langle x_{1}, b_{j}\rangle|^{2}&=\sum_{j=1}^{n}\left|\langle x_{1}, b_{j}\rangle\right|^{2}-\sum_{j\in T}|\langle x_{1}, b_{j}\rangle|^{2}\\&\ge \beta_{1}\|x_{1}\|^{2}
-\alpha_{2}M(A^{*}, B)^{2}|S||T|\|x\|^{2}.\end{split}
\end{equation*}
It yields that
\begin{equation}\label{231}
    \begin{split}
\beta_{1}^{\frac{1}{2}}\|x_{1}\| \le &\left( \sum_{j\in T^{c}}|\langle x_{1}, b_{j}\rangle|^{2}+\alpha_{2}M(A^{*}, B)^{2}|S||T|\|x\|^{2} \right)^{\frac{1}{2}}\\
 \le& \left(\sum_{j\in T^{c}}|\langle x_{1}, b_{j}\rangle|^{2}\right)^{\frac{1}{2}}+\alpha_{2}^{\frac{1}{2}}M(A^{*}, B)|S|^{\frac{1}{2}}|T|^{\frac{1}{2}}\|x\|\\
\le &\left(\sum_{j\in T^{c}}|\langle x, b_{j}\rangle|^{2}\right)^{\frac{1}{2}}+ \left(\sum_{j\in T^{c}}|\langle x_{2}, b_{j}\rangle|^{2}\right)^{\frac{1}{2}}\\
&+  \alpha_{2}^{\frac{1}{2}}M(A^{*}, B)|S|^{\frac{1}{2}}|T|^{\frac{1}{2}}\|x\|\\
\le & \left(\sum_{j\in T^{c}}|\langle x, b_{j}\rangle|^{2}\right)^{\frac{1}{2}} +\beta_{2}^{\frac{1}{2}}\|x_{2}\|+ \alpha_{2}^{\frac{1}{2}}M(A^{*}, B)|S|^{\frac{1}{2}}|T|^{\frac{1}{2}}\|x\|.
    \end{split}
\end{equation}
On the other hand, 
\begin{equation}
\label{232}
    \begin{split}
\alpha_{1}\|x_{2}\|^{2}&\le \sum_{k=1}^{m}|\langle x_{2}, a_{k}\rangle|^{2}\le \sum_{k=1}^{m} \left( \sum_{i\in S^{c}}|\langle x, a_{i}\rangle|^{2}\right)
\left( \sum_{i\in S^{c}}|\langle a_{i}^{*}, a_{k}\rangle|^{2} \right)\\
&\le m^{2}M(A^{*}, A)^{2} \sum_{k\in S^{c}}|\langle x, a_{k}\rangle|^{2}.
   \end{split}
\end{equation}
By \eqref{231} and \eqref{232}, we have 
\begin{equation*}
    \begin{split}
\|x\|\le \|x_{1}\|+\|x_{2}\| \le & \frac{1}{\beta_{1}^{\frac{1}{2}}} \left( \sum_{j\in T^{c}}|\langle x, b_{j}\rangle|^{2}\right)^{\frac{1}{2}}
+\left( 1+\left(\frac{\beta_{2}}{\beta_{1}}\right)^{1/2}   \right)\|x_{2}\|\\&+\left(\frac{\alpha_{2}}{\beta_{1}}\right)^{1/2}M(A^{*}, B)|S|^{\frac{1}{2}}|T|^{\frac{1}{2}}\|x\|. 
   \end{split}
\end{equation*}
Regrouping terms, we obtain
\begin{equation*}
    \begin{split}
C(S, T)\|x\|\le \left( \sum_{k\in S^{c}}|\langle x, a_{k}\rangle|^{2}  \right)^{\frac{1}{2}} +\left(\sum_{j\in T^{c}}|\langle x, b_{j}\rangle|^{2}\right)^{1/2},
   \end{split}
\end{equation*}
where 
\begin{equation*}
\begin{split}
    C(S,T)=&\left(1-\left(\frac{\alpha_{2}}{\beta_{1}}\right)^{1/2}M(A^{*}, B)|S|^{\frac{1}{2}}|T|^{\frac{1}{2}}\right)\\
    & \times \left(\max\left\{\beta_{1}^{-1/2},  \left(1+ \left(\frac{\beta_{2}}{\beta_{1}}\right)^{\frac{1}{2}}\right)\alpha_{1}^{-\frac{1}{2}}mM(A^{*}, A)      \right\} \right)^{-1}.
\end{split} 
\end{equation*}
\end{proof}
\begin{remark}
    It is interesting to further improve the bound in terms of 
    \begin{equation*}
        \mu_{r}(A,B)=\max_{1\le j\le n}\left( \sum_{k=1}^{m}|\langle a_{k}, b_{j}\rangle|^{r'}\right)^{\frac{r}{r'}},
    \end{equation*}
    where $r\in [1, 2]$ and $1/r+1/r'=1$, similarly as done in \cite{RT}.
\end{remark}
\section{Incompatibility of multiple tight frames}
The main goal of this section is to extend the notion of
$s$-order incompatibility for  multiple tight frames ($n\ge 3$). When dealing with multiple frames, a suitable approach should consider  the overall coherence of the entire collection, rather than analyzing each pair separately and subsequently combining the results. To illustrate the differences, we conclude this section with an example.

\begin{definition} \label{fd1} Suppose $A_{i}=\{a_k^{(i)}, 1\le k\le m_{i}\}$ are $n$ tight frames. They  are said to be  $s$-order incompatible if:
\begin{enumerate}
    \item  For any non-empty subsets $S_{i}\subset I_{i}$ satisfying $\sum_{i=1}^{n}|S_{i}|<s$, and for any non-zero $x\in \mathscr{H}$, the following 
    \begin{equation*}
        \sum_{k\in S_{i}}\left|\left\langle x, a_{k}^{(i)}\right\rangle\right|^{2}=\alpha_{i}\|x\|^{2}, 
    \end{equation*} cannot hold for all $i=1,2,\ldots, n$.
    \item There exist non-empty subsets $S_{i}\subseteq I_{i}$ with $\sum_{i=1}^{n}|S_{i}|=s$, and a non-zero $x\in \mathscr{H}$, such that 
    \begin{equation*}
        \sum_{k\in S_{i}}\left|\left\langle x, a_{k}^{(i)}\right\rangle\right|^{2}=\alpha_{i}\|x\|^{2}
    \end{equation*}
   holds for  all $i=1,2,\ldots, n$.
\end{enumerate}
\end{definition}
\begin{theorem} The tight frames $A_{1}, A_{2}, \cdots, A_{n}$ are  $s$-order incompatible if and only if  for any non-zero pure state $\ket{\varphi}$, it holds that
\begin{equation*}
    \sum_{k=1}^{n}n_{A_{k}}(\ket{\varphi})\ge s.
\end{equation*}
The constant  $s$ here is optimal.
\end{theorem}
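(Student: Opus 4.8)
The plan is to follow the template of the two-frame argument in Theorem~\ref{sur}, collapsing everything onto the single quantity
\[ t := \min_{\ket{\varphi}\neq 0}\ \sum_{i=1}^{n} n_{A_i}(\ket{\varphi}), \]
and proving that $t=s$. The engine of the whole proof is the elementary observation that, for a fixed nonzero $\ket{\varphi}$ and a tight frame $A_i$ with constant $\alpha_i$, a subset $S_i\subseteq I_i$ satisfies the saturation identity $\sum_{k\in S_i}|\langle \varphi, a_k^{(i)}\rangle|^2=\alpha_i\|\varphi\|^2$ if and only if $S_i$ contains the support $\operatorname{supp}_i(\varphi):=\{k:\langle a_k^{(i)},\varphi\rangle\neq 0\}$. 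Indeed, the full tight-frame identity gives $\sum_{k=1}^{m_i}|\langle \varphi, a_k^{(i)}\rangle|^2=\alpha_i\|\varphi\|^2$, so saturation over $S_i$ is equivalent to the vanishing of every term indexed by $S_i^c$, i.e. to $\operatorname{supp}_i(\varphi)\subseteq S_i$; the minimal saturating set is therefore exactly $\operatorname{supp}_i(\varphi)$, of cardinality $n_{A_i}(\ket{\varphi})$. First I would record that, since each $\alpha_i>0$, no support set can be empty when $\ket{\varphi}\neq 0$, which secures the nonemptiness demanded in Definition~\ref{fd1}.

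With this correspondence in hand, I would verify that $t$ meets both clauses of Definition~\ref{fd1}. For clause~(2), take a minimizer $\ket{\varphi_0}$ and set $S_i=\operatorname{supp}_i(\varphi_0)$; these are nonempty, satisfy $\sum_i|S_i|=t$, and saturate all $n$ frame identities simultaneously at $x=\varphi_0$. For clause~(1), suppose for contradiction that nonempty sets $S_i$ with $\sum_i|S_i|<t$ saturated all identities at some nonzero $x$; the correspondence forces $\operatorname{supp}_i(x)\subseteq S_i$ for every $i$, whence $\sum_i n_{A_i}(\ket{x})\le\sum_i|S_i|<t$, contradicting the minimality of $t$. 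Thus the collection $A_1,\dots,A_n$ is $t$-order incompatible.

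This single equality then delivers the whole theorem. On one side, Definition~\ref{fd1} pins down the incompatibility order uniquely, so the collection is $s$-order incompatible precisely when $s=t$. On the other side, the asserted inequality $\sum_{i=1}^{n} n_{A_i}(\ket{\varphi})\ge s$ holding for all nonzero $\ket{\varphi}$ with $s$ optimal says exactly that $\min_{\ket{\varphi}\neq 0}\sum_i n_{A_i}(\ket{\varphi})=s$, i.e. $t=s$. Both sides of the claimed equivalence therefore reduce to the identity $t=s$ established above, and the optimality of the constant is automatic since $t$ is by construction the value attained at the minimizer. I expect the only genuine subtlety — the step to be stated carefully rather than waved through — to be the support/saturation correspondence together with the nonemptiness of the supports; once these are secured, the multi-frame bookkeeping (the sum $\sum_{i=1}^{n}$ in place of the two-term sum) is a routine extension of Theorem~\ref{sur}. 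A minor point worth flagging is that the defining minimum is genuinely attained: the values $\sum_i n_{A_i}(\ket{\varphi})$ are positive integers bounded below by $n$, so a minimizer exists.
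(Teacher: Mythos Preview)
Your proof is correct and follows essentially the same approach as the paper's: both reduce to defining $t=\min_{\ket{\varphi}\neq 0}\sum_i n_{A_i}(\ket{\varphi})$ and verifying that $t$ satisfies both clauses of Definition~\ref{fd1}, whence $t=s$. Your write-up is in fact more careful than the paper's, making the saturation/support correspondence explicit (and checking nonemptiness and attainment), whereas the paper's argument phrases the same step somewhat loosely in terms of spans.
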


\begin{proof} Denote \[n_{\min}=\min_{\ket{\varphi}\neq 0} \sum_{k=1}^{n}n_{A_{k}}(\ket{\varphi})=t.\] Then  there exists a pure state \( \ket{\psi}\) such that
\[
\sum_{k=1}^{n}n_{A_{k}}(\ket{\psi}) = t,
\]
and there does not exist a pure state \( \ket{\psi'} \) such that \( \sum_{k=1}^{n}n_{A_{k}}(\ket{\psi'})  < t \).

For such \( \ket{\psi} \), there exist nonempty subsets \( S_{A_{k}} \subseteq A_{k} \) such that \( |S_{A_{k}}| = n_{A_{k}}(\ket{\psi}) \),  and \( \ket{\psi} \in \bigcap_{k=1}^{n}{\rm span}(S_{A_{k}})\). The nonexistence of such \( \ket{\psi'} \) implies that there do not exist nonempty subsets \( S_{A_k} \subseteq A_{k} \)  such that \( |S_{A_{k}}| = n_{A_{k}}(\ket{\psi'}) \),  and \( \ket{\psi'} \in \bigcap_{k=1}^{n}{\rm span}(S_{A_{k}})\). These two conditions precisely coincide with the two conditions in Definition \ref{fd1}, which implies that $t=s$. Then the claim follows. \( \square \)
    
\end{proof}

We show that the multiple cases ($n\ge 3$) can not be derived from the classical two-frame case. 

\begin{theorem} Suppose tight frames $A_{1}, A_{2}, \cdots, A_{n}$ are $s$-order incompatible and each pair $A_{i}$ and $A_{j}$ are $s_{ij}$-order incompatible, then we have 
    \begin{equation}\label{per}
    \frac{1}{2}\sum_{i<j}s_{ij}\le s.
\end{equation}
\end{theorem}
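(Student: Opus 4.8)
The plan is to route both orders through the support–uncertainty characterizations already established. By the characterization theorem for multiple frames proved just above, the joint order is the minimal total support,
\[
s = \min_{\ket{\varphi}\neq 0}\ \sum_{k=1}^{n} n_{A_k}(\ket{\varphi}),
\]
while by Theorem \ref{sur} each pairwise order is
\[
s_{ij} = \min_{\ket{\varphi}\neq 0}\ \bigl(n_{A_i}(\ket{\varphi}) + n_{A_j}(\ket{\varphi})\bigr).
\]
First I would fix a global minimizer $\ket{\psi}$ realizing $s$ and record its support profile $x_k := n_{A_k}(\ket{\psi})$, so that $\sum_{k=1}^{n} x_k = s$. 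The whole strategy is then to transport this single witnessing state into every pairwise minimization.

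The key step is that $\ket{\psi}$ is an admissible competitor in each pairwise problem, so it immediately produces the family of bounds $s_{ij} \le n_{A_i}(\ket{\psi}) + n_{A_j}(\ket{\psi}) = x_i + x_j$ for all $1\le i<j\le n$. Summing over unordered pairs reduces everything to the combinatorial identity
\[
\sum_{i<j}(x_i+x_j) = (n-1)\sum_{k=1}^{n} x_k = (n-1)\,s,
\]
since each index $k$ lies in exactly $n-1$ pairs. This is the decisive line, and it is precisely where the stated constant must be confronted: for $n=3$ the factor $n-1$ equals $2$, so the crude pairwise bound already yields $\tfrac12\sum_{i<j}s_{ij}\le s$, whereas for $n\ge 4$ the same bound only gives $\sum_{i<j}s_{ij}\le(n-1)s$, i.e.\ the weaker $\tfrac{1}{\,n-1}\sum_{i<j}s_{ij}\le s$.

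The hard part will therefore be closing (or correctly restricting) the gap between the factor $n-1$ produced by this argument and the factor $2$ in the statement. The obstruction is intrinsic to using one global witness: the estimate $s_{ij}\le x_i+x_j$ is lossy exactly because each $s_{ij}$ is attained at its own, generically distinct, minimizer, and replacing $\ket{\psi}$ by the separate pairwise minimizers does not couple the $n$ counts in a way that beats $(n-1)$. A genuine proof of the $\tfrac12$–version for all $n$ would thus need a sharper pairwise estimate reflecting these distinct minimizers, or an extra hypothesis on the frames; indeed a simple test case already signals trouble, since taking $A_1=\cdots=A_n$ to be one orthonormal basis gives $s=n$ but $s_{ij}=2$, whence $\tfrac12\sum_{i<j}s_{ij}=\tfrac{n(n-1)}{2}$, which exceeds $s$ once $n\ge 4$. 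Consequently I expect the crux to be this factor, and I would either prove the natural $\tfrac{1}{\,n-1}$ bound in general and record $\tfrac12$ as the $n=3$ specialization, or isolate the additional structural assumption under which the stronger $\tfrac12$ bound can survive.
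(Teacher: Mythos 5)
Your argument is, in substance, the paper's own proof: the paper also takes a single state witnessing the joint order, applies the pairwise support bounds $n_{A_i}(x)+n_{A_j}(x)\ge s_{ij}$ to it, and sums over pairs --- its ``Combining \eqref{p1} and \eqref{p2}, we get the result'' is exactly your double-counting step. Moreover, your criticism of the constant is correct, and the defect you found is in the paper, not in your argument. Since each index lies in $n-1$ pairs, this route yields $\sum_{i<j}s_{ij}\le (n-1)\,s$, which coincides with the stated bound \eqref{per} only when $n=3$ --- notably, the paper's illustrating Example \ref{expli} has exactly three frames. For $n\ge 4$ the stated $\tfrac12$-bound does not follow from the proof and is in fact false: your test case of $n$ copies of one orthonormal basis gives $s=n$ and $s_{ij}=2$, so $\tfrac12\sum_{i<j}s_{ij}=\binom{n}{2}>n$ once $n\ge 4$, and the same failure persists for $n$ distinct bases sharing a single common vector, so it is not an artifact of taking identical frames. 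Your proposed repair --- prove $\tfrac{1}{n-1}\sum_{i<j}s_{ij}\le s$ in general and record $\tfrac12$ as the $n=3$ specialization --- is the correct reading of what the argument actually establishes.

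One further point where your write-up is more careful than the paper's: the paper's \eqref{p1} asserts only the existence of a nonzero $x$ with $\sum_k n_{A_k}(x)\ge s$, which is trivially true (it holds for every $x$ by the support uncertainty relation) and useless for bounding $\sum_{i<j}s_{ij}$ from above. What the combination step genuinely requires is a minimizer attaining equality, $\sum_k n_{A_k}(x)=s$, which exists by the optimality clause of the multi-frame support theorem; your explicit choice of a global minimizer with support profile $x_k=n_{A_k}(\ket{\psi})$ and $\sum_k x_k=s$ supplies precisely this missing ingredient.
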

\begin{proof} Since $A_{1}, A_{2}, \cdots, A_{n}$ are $s$-order incompatible, then there exists a nonzero $x\in \mathscr{H}$ such that 
\begin{equation} \label{p1}
    n_{A_{1}}(x)+n_{A_{2}}(x)+\ldots+n_{A_{n}}(x)\ge s.
\end{equation}
On the other side, since $A_{i}$ and $A_{j}$ are $s_{ij}$-order incompatible, we have 
\begin{equation}\label{p2}
    n_{A_{i}}(x)+n_{A_{j}}(x)\ge s_{ij}.
\end{equation}
Combining \eqref{p1} and \eqref{p2}, we get the result.
    
\end{proof}

The following example shows that the strict inequality in \eqref{per} can occur.

\begin{ex} \label{expli} Consider the frames $A=\{a_{1}, a_{2}, a_{3}\}$, $B=\{b_{1}, b_{2}, \ldots, b_{5}\}$ and $C=\{c_{1}, c_{2}, c_{3}, c_{4}\}$, where
\begin{enumerate}
    \item $
    a_{1}=(2\sqrt{3}, 1), a_{2}=(-2, -\sqrt{3}), a_{3}=(0, 2\sqrt{3});
$
\item $b_{1}=(1, 1), b_{2}=(\frac{1}{2}, \frac{1}{2}), 
b_{3}=(0, 1), b_{4}=(2, -\sqrt{3}),$\\ $ b_{5}=(-\left(2\sqrt{3}-\frac{5}{4}\right)^{\frac{1}{2}}, -\left(2\sqrt{3}-\frac{5}{4})^{\frac{1}{2}}\right)$;
\item  $c_{1}=(2, -\sqrt{3}), c_{2}=(2\cdot 3^{\frac{1}{4}}, 2\cdot 3^{\frac{1}{4}}), c_{3}=(-2, \sqrt{3}), c_{4}=(0, -\sqrt{2})$.
\end{enumerate}
\begin{figure*}[h]
	\centering
	\subfloat[(1)]{\includegraphics[width=1.7in]{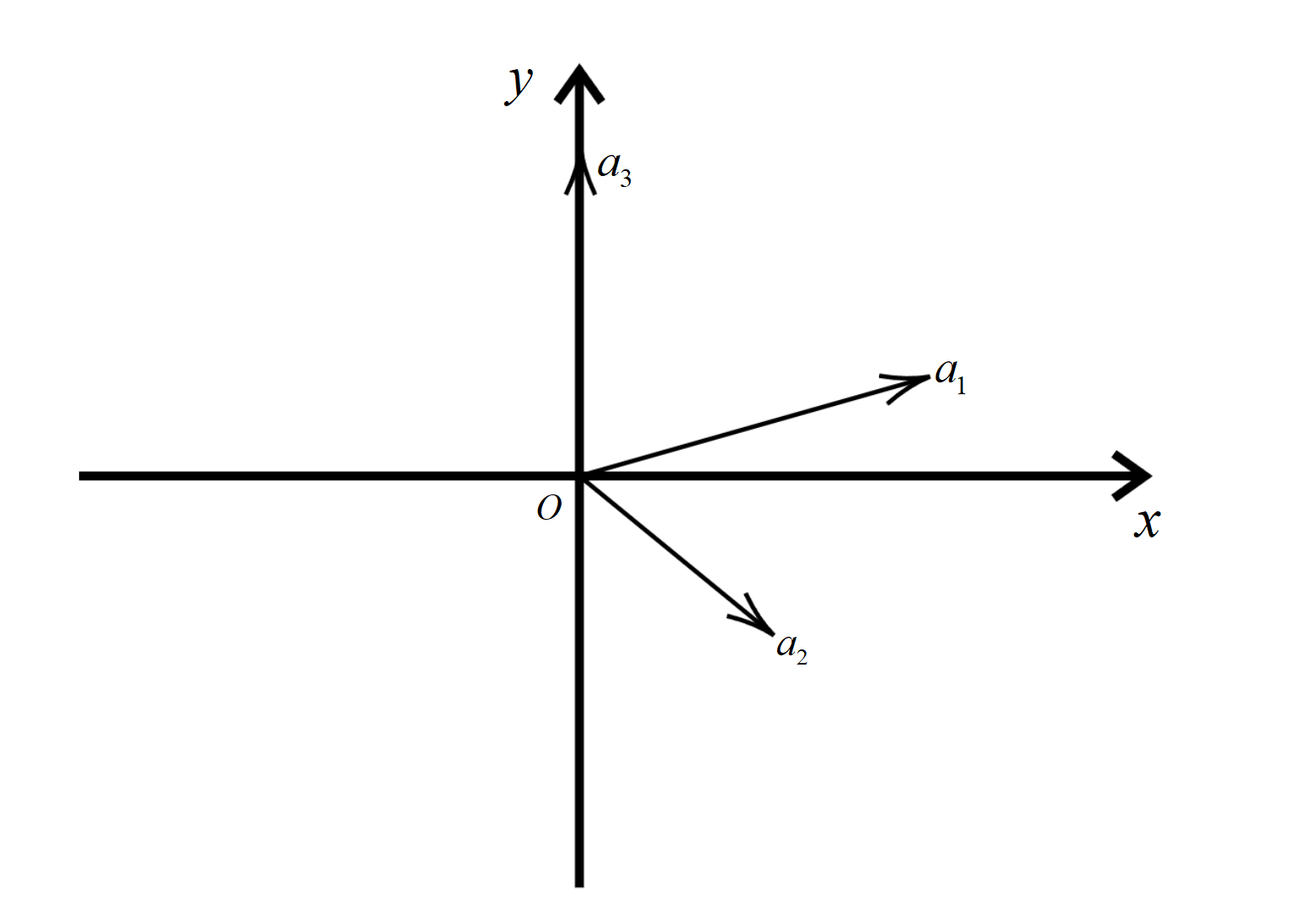}}%
	\subfloat[(2)]{\includegraphics[width=1.5in]{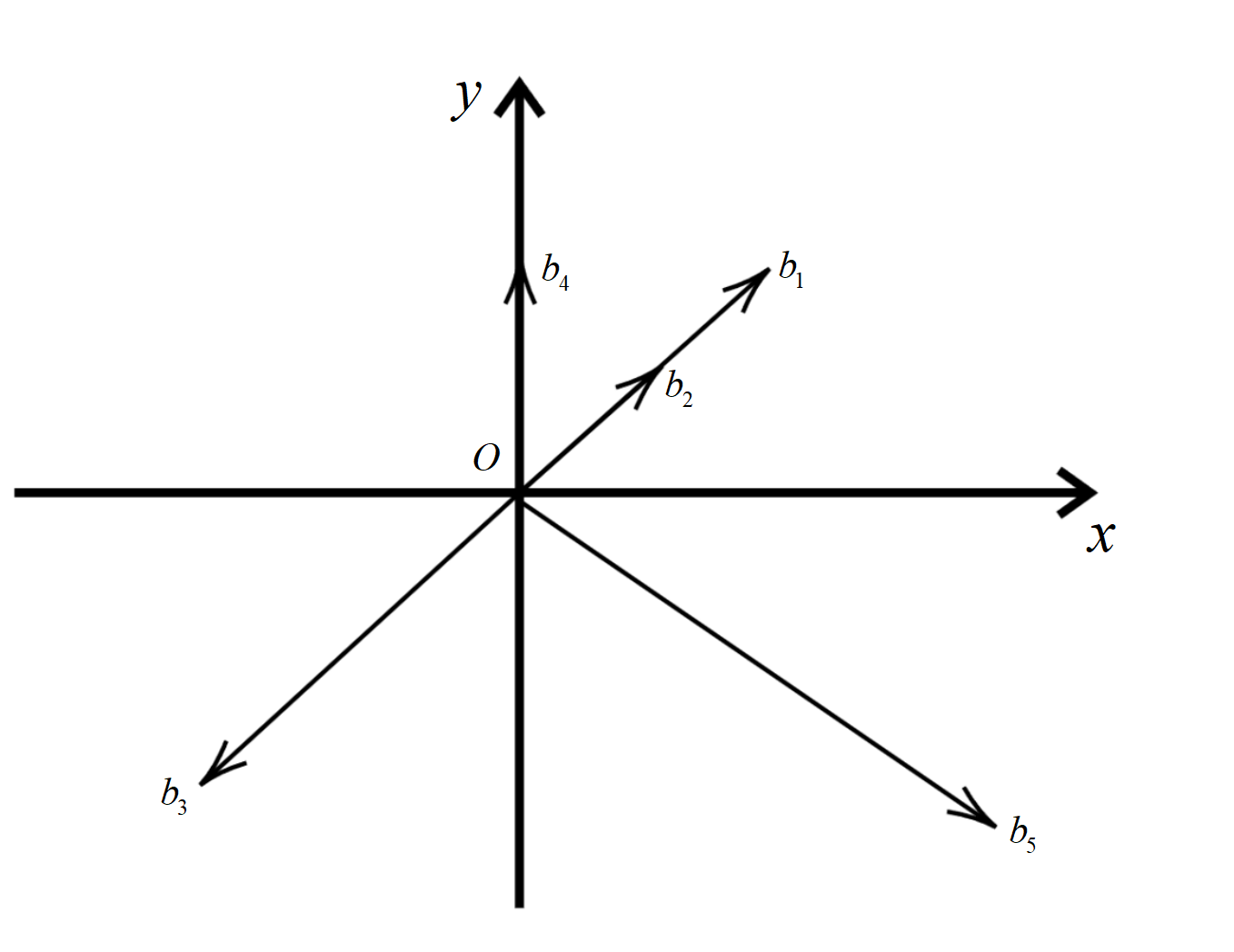}}%
        \subfloat[(3)]{\includegraphics[width=1.5in]{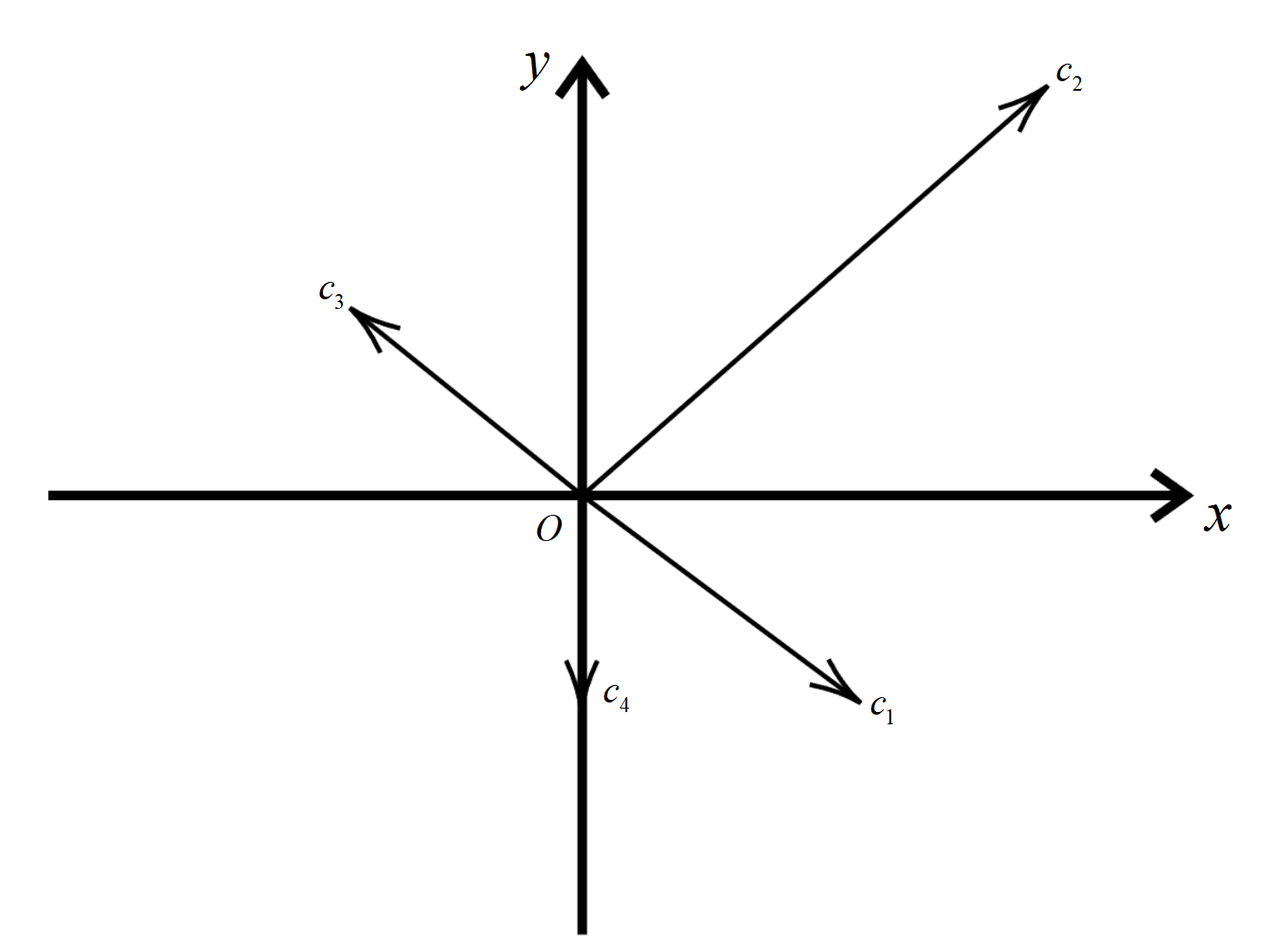}}%
	\caption{frames vectors of $A, B$ and $C$}
	\label{PQRAFs1}
\end{figure*}
Straightforward computations show that $s_{AB}=5$, $s_{BC}=5$, $s_{AC}=4$ and $s=8$. Therefore,
\begin{equation*}
    \frac{s_{AB}+s_{BC}+s_{AC}}{2}<s.
\end{equation*}
 This means that the cases for multiple measurements $n\ge 3$  differs from the uncertainty relation for two bases.
\end{ex}

\section{Conclusion}
In this paper, we show that the $s$-order incompatibility of two orthonormal bases coincides with the spark of matrix formed by the measurement vectors, thereby linking quantum measurement with compressed sensing. The main contribution of this work is the classification of the incompatibility of two and multiple ($n\ge 3$) tight frames (rank-one POVMs), leading to sharp support uncertainty relations and new generalizations of Ghobber-Jaming uncertainty relation. Potential directions 
for future research include applications in quantum tomography, cryptographic protocol design, and explicit calculations of the order of incompatibility.

\section*{Acknowledgments}
This work  was partially supported  by NSFC Grant No.12101451.

\bibliographystyle{amsplain}

\end{document}